\title{Sliding Suffix Tree}
\titlerunning{Sliding Suffix Tree} 
\author[1,2]{Andrej Brodnik}
\author[1]{Matevž Jekovec}
\affil[1]{University of Ljubljana, Faculty of Computer and Information Science, SI\\
	\texttt{\{andrej.brodnik,matevz.jekovec\}@fri.uni-lj.si}}
\affil[2]{University of Primorska, Faculty of Mathematics, Natural Sciences and Information Technologies, SI\\
	\texttt{andrej.brodnik@upr.si}}
\authorrunning{A. Brodnik and M. Jekovec} 
\subjclass{F.2.2 Nonnumerical Algorithms and Problems, E.1 DATA STRUCTURES, F.1.2 Modes of Computation}
\keywords{suffix tree, online pattern matching, sliding window, lowest common ancestor}
\begin{document}

\maketitle

\begin{abstract}
We consider a sliding window $W$ over a stream of characters from some alphabet of constant size. The user wants to perform deterministic substring matching on the current sliding window content and obtain positions of the matches. We present an indexed version of the sliding window based on a suffix tree. The data structure of size $\Theta(|W|)$ has optimal time queries $\Theta(m+occ)$ and amortized constant time updates, where $m$ is the length of the query string and $occ$ is the number of its occurrences.
\end{abstract}

\section{Introduction and Related Work}

Text indexing and big data in general is a well studied computer science and engineering field. A specially intriguing area is (infinite) streams of data which are too big to fit onto disk, and consequently, cannot be indexed in the traditional way (e.g.\ by using FM-index \cite{Ferragina2000}). In practice, data streams are processed on-the-fly by efficient, carefully engineered filters. An excerpt of the data called text features are stored for later usage, while the original stream data is discarded.

In our research, we consider an infinite stream of characters where the main memory holds the most recent characters in the stream in terms of sliding window. At any moment, a user wants to find all occurrences of the given substring in the current window. In general, to answer the query, we could construct an automaton from a query using KMP \cite{Knuth1977} or Boyer-Moore \cite{Boyer1977}, and then feed the stream to the constructed automaton. This however requires that all queries are known in advance. On the other hand, if the query arrives on-the-fly, the automaton needs to be constructed from scratch. In both cases we need to scan the whole window which requires linear time in the size of a window. A better possibility would be to run Ukkonen's online suffix tree construction algorithm \cite{Ukkonen1995} and construct the suffix tree. When the query arrives, we inject a delimiter character to finalize the suffix tree construction and perform the query on the constructed tree. However, finalizing might take, in the worst case, linear time in the size of the window.

In this paper we show how to construct and maintain an indexed version of the sliding window allowing a user to find occurrences of a substring in optimal time and space. This is the first data structure for on-the-fly text indexing which requires amortized constant time for updates and worst case optimal time for queries.

In the following section we define the notation and preliminary data structures and algorithms. In Section \ref{sec:sliding-suffix-tree} we formally present a sliding suffix tree and we conclude in Section \ref{sec:conclusions} with discussion and open problems.

\section{Notation and Preliminaries}

Capital letters $A, B, C \ldots$ denote strings and lower case letters $i, j \ldots$ integers except for $c$ which denotes an arbitrary character. Further, lower case Greek letters $\alpha, \beta \ldots$ represent nodes in a tree, and calligraphic capital letters $\mathcal{T}, \mathcal{L}$, and $\mathcal{A}$ tree-based data structures. We denote concatenation of two strings by simply writing one string beside the other, e.g.\ $AB$ and the length of a string $A$ as $|A|$. By $A[i:j]$ we denote a substring of $A$ starting at position $i$ and ending at $j$ inclusive, where $1 \leq i \leq j \leq |A|$. Suffix of $A$ starting at $i$ is $A[i:]$ and a prefix of $A$ ending at $j$ is $A[:j]$ both inclusive.

We denote by $W$ the sliding window over an infinite input stream of characters which are from an alphabet of constant size. By $n$, we denote the number of all characters read so far. To store a suffix starting at the current position, we store the current $n$. At any later time $n'$, we can retrieve the content of this suffix as $W[n-(n'-|W|):]$, where $n-n'<|W|$, for the suffix to be present in $W$.

\subsection{Suffix Tree, Suffix Links Tree, and Lowest Common Ancestor}

A \emph{suffix tree} is a dictionary containing each suffix of the text as a key with its position in the text as its value. The data structure is implemented as a PATRICIA tree, where each internal node stores a skip value and the first (discriminative) character of the incoming edge, whereas each leaf stores the position of the suffix in the text. We denote by $|\alpha|$ a \emph{string depth} operation of some node $\alpha$ in a suffix tree and define it as a sum of all skip values from the root to $\alpha$. Each edge implicitly contains a \emph{label} which is a substring of a suffix starting and ending at the string depth of the originating and the string depth of the terminating node respectively. We say a node $\alpha$ \emph{spells out} string $A$, where $A$ is a concatenation of all labels from the root to $\alpha$. Or more formally, take a leaf in a subtree of $\alpha$ and let it store position $i$ of a suffix, then $A=W[i-(n-|W|):i+|\alpha|-(n-|W|)]$. Next, let $A$ and $A'$ be strings which are spelled out by nodes $\alpha$ and $\alpha'$ respectively. We define a \emph{suffix link} as an edge from node $\alpha'$ to $\alpha$, if $A=A'[2:]$, and denote this by $\alpha = \text{suffix\_link}(\alpha')$. If we follow suffix links from $\alpha'$ $i$ times, we write this as $\text{suffix\_link}(\alpha')^i$.

We define a \emph{suffix links tree} as follows. For each internal node in a suffix tree let there be a node in a suffix links tree. For each suffix link from $\alpha'$ to $\alpha$ in the suffix tree, $\alpha$ is a parent of $\alpha'$ in the suffix links tree. Consequently, following a suffix link from $\alpha'$ $i$ times is the same as finding the $i^{th}$ ancestor of $\alpha'$ in the suffix links tree.

The lowest common ancestor (LCA) of nodes $\alpha$ and $\alpha'$ in a tree is the deepest node which is an ancestor of both $\alpha$ and $\alpha'$. The first constant time, linear space LCA algorithm was presented in \cite{Harel1984} and later simplified by \cite{Schieber1988}. The dynamic version of the data structure still running in constant time and linear space was introduced in \cite{Cole2005}. We will use this result to perform constant time LCA lookups and maintain it in amortized constant time.

\subsection{Ukkonen's online suffix tree construction algorithm}
\label{chap:ukkonen}

In \cite{Ukkonen1995} Ukkonen presented a suffix tree construction algorithm which builds the data structure in a single pass. During the construction, the algorithm maintains the following invariants in amortized constant time:
\begin{itemize}
	\item implicit buffer $B$ which corresponds to the longest repeated \emph{suffix} of the text processed so far,
	\item the \emph{active node} $\beta$ which represents a node where we end up by navigating $B$ in the suffix tree constructed so far, i.e.\ $B$ is a prefix of a string spelled out by $\beta$.
\end{itemize}

The execution of the algorithm can be viewed as an automaton with two states. In the \emph{buffering state}, the automaton reads a character $c$ from the input stream and implicitly extends all labels of leaves by $c$. Then, it checks whether $c$ matches the next character of the prefix of length $|B|+1$ spelled out by $\beta$. If it does, $c$ is appended to $B$ and the automaton remains in the buffering state reading the next character. When $|B|=|\beta|$, a child in direction of $c$ becomes a new active node.

On the other hand, if character $c$ does not match, automaton switches to the \emph{expanding state}. First it inserts a new branch in direction of $c$ with a single leaf storing the suffix position $n-|B|$. If $|B|=|\beta|$, the new branch is added as a child to $\beta$. Otherwise, if $|B|<|\beta|$ the incoming edge of $\beta$ is split such that the string depth of the newly inserted internal node is $|B|$, and the new branch is added to this node. Once the branch is inserted, the first character is removed from $B$ obtaining new $B'=B[2:]$. A new active node corresponding to $B'$ is found in the following way. Let $\alpha$ denote the parent of the original active node $\beta$. Then the new active node $\beta'$ is a node obtained by navigating suffix $B'[|\alpha|:]$ from a node $\text{suffix\_link}(\alpha)$. When $\beta'$ is obtained, $c$ is reconsidered. If a branch in direction of $c$ exists, the automaton switches to buffering state. Otherwise, it remains in the expanding state and repeats the new branch insertion. Each time the expanding state is re-entered, $B$ is shortened for one character. In the worst case, if $c$ does not occur in the text yet, the suffix links will be followed all the way up to the root node, and $c$ will be added as a new child to the root node. In this case the implicit buffer $B$ will be an empty string.

We say the currently constructed suffix tree is \emph{unfinalized}, until $B$ is completely emptied. Moreover, there are exactly $|B|$ leaves missing in the unfinalized tree and these correspond to suffixes of $B$. For finite texts we finalize the suffix tree at the end by appending a unique character $\$$ which forces the algorithm to empty $B$ and finalize the tree. For infinite streams however, there is no final character. Consequently, we need to support:

\begin{enumerate}
	\item Queries: When performing queries, we need to report the occurrences both in the partially constructed suffix tree and in $B$.
	\item Maintenance: The original Ukkonen's algorithm supports adding a new character to the indexed text. When a window is shifted, we also remove the oldest (longest) suffix from the text.
\end{enumerate}


\section{Sliding Suffix Tree}
\label{sec:sliding-suffix-tree}

The \emph{sliding suffix tree} is an indexed version of the current sliding window content $W$. Formally, we define two operations:
\begin{itemize}
	\item \texttt{find($W$, $Q$)} --- returns all positions of the query string $Q$ in $W$.
	\item \texttt{shift($W$, $c$)} --- appends a character $c$ to $W$ and removes the oldest character from $W$.
\end{itemize}

Initially, $W$ is empty and until the length of $W$ reaches the desired size, \texttt{shift} operation only appends new characters.

The sliding suffix tree is built on top of Ukkonen's online suffix tree construction algorithm. We maintain a possibly unfinalized suffix tree $\mathcal{T}$ including implicit buffer $B$ and active node $\beta$ (Fig. \ref{fig:T-W-B} on the left). Figure \ref{fig:T-W-B} on the right illustrates the position of $W$ and $B$ in a stream. Notice $B$ is always a proper suffix of $W$. Additionally, we maintain a suffix links tree of $\mathcal{T}$, $\mathcal{L}$, with auxiliary data structure $\mathcal{A}$ required for constant time LCA on $\mathcal{L}$.

\begin{figure}[htb]
	\begin{center}
		\includegraphics[width=14cm]{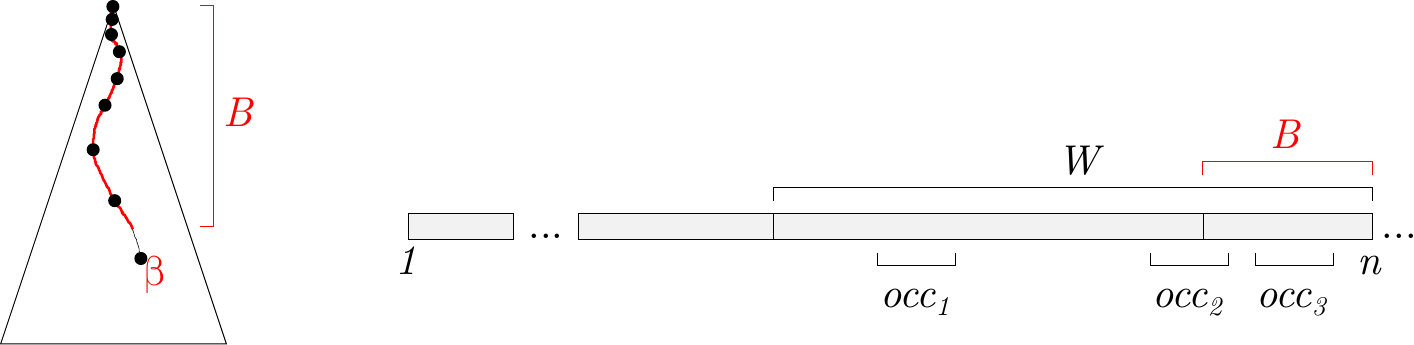}
	\end{center}
	\caption{On the left: Illustration of partially constructed suffix tree $\mathcal{T}$ with implicit buffer $B$ and active node $\beta$. On the right: Illustration of the stream, the sliding window $W$, the implicit buffer $B$, and three cases for positions of the query strings $occ_1$, $occ_2$, and $occ_3$.}
	\label{fig:T-W-B}
\end{figure}

In the next two subsections we show how to perform the find operation in time $\Theta(|Q|+occ)$ in the worst case and the shift operation in constant amortized time. As a model of computation, we use the standard RAM model.

\subsection{Queries}

To find all occurrences of query $Q$ in $W$, we first navigate $Q$ in $\mathcal{T}$. Let $\mathcal{T}_Q$ correspond to a subtree rooted at the node at which we finished the navigation. Leaves of $\mathcal{T}_Q$ make up the first part of the resulting set. In Figure \ref{fig:T-W-B} $occ_1$ corresponds to such occurrence. Also, position of $occ_2$ in the same figure will be contained in one of the leaves of $\mathcal{T}_Q$, since $\mathcal{T}$ contains all suffixes that start at the beginning of $W$ up to the beginning of $B$.

The second part of the resulting set are the missing leaves of $\mathcal{T}_Q$ due to the unfinalized state of $\mathcal{T}$. Intuitively, these leaves correspond to suffixes of $B$ which start with $Q$. $occ_3$ in Figure \ref{fig:T-W-B} illustrates one such position. Obviously, if $|B|<|Q|$ there are no matches of $Q$ in $B$ and we solely return the leaves of $\mathcal{T}_Q$. If $|B|=|Q|$, we test whether the active node $\beta$ is the same node as the root of $\mathcal{T}_Q$. If it is, we add one additional occurrence at position $n-|B|$ to the resulting set.

The case $|B|>|Q|$ requires special attention. One solution would be to scan $B$ for $Q$ using KMP or similar approaches. But since $|B|=O(|W|)$ in the worst case, we cannot afford the scan. In the remainder of this subsection we show how to determine the missing leaves in time $O(|Q|+occ)$. First, we claim that the navigated subtree $\mathcal{T}_Q$ always exists, if there are any occurrences of $Q$ to be found in $B$.

\begin{lemma}
	If $Q$ exists in buffer $B$, then a subtree $\mathcal{T}_Q$ exists by navigating the query $Q$ in $\mathcal{T}$.
\end{lemma}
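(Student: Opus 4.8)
The plan is to reduce the statement to a purely combinatorial fact about occurrences of $Q$, and then to settle that fact with a periodicity argument. First I would pin down exactly which strings are navigable from the root of the unfinalized tree $\mathcal{T}$. By Ukkonen's invariants the only suffixes missing from $\mathcal{T}$ are the $|B|$ suffixes of $B$, i.e.\ the suffixes $W[i:]$ whose starting position lies at or after the start $p=|W|-|B|+1$ of $B$; every suffix $W[i:]$ with $i<p$ is present as a leaf, with its label implicitly extended to the current end of $W$. Hence the strings one can navigate in $\mathcal{T}$ are precisely the prefixes of these leaves, that is, the substrings of $W$ that have an occurrence starting at some position $i<p$. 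So it suffices to prove the following: if $Q$ occurs in $B$, then $Q$ has an occurrence in $W$ starting strictly before position $p$.

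Next I would invoke the defining property of $B$: it is the longest \emph{repeated} suffix of the processed text, so $B$ itself occurs at least twice and therefore has an occurrence at some position $t<p$. Writing the given occurrence of $Q$ inside $B$ as $Q=B[a:b]$, the earlier copy of $B$ immediately exhibits an occurrence of $Q$ at position $t+a-1$. If the two copies of $B$ are disjoint, i.e.\ $d:=p-t\ge|B|$, then $t+a-1\le t+|B|-1<p$, so this occurrence already starts before $B$ and we are done.

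The remaining case, where the two copies of $B$ overlap ($d<|B|$), is the crux, since then $t+a-1$ may be $\ge p$ and the naive earlier occurrence is no longer good enough. Here the two aligned occurrences of $B$ force $W[i]=W[i+d]$ for all $t\le i\le t+|B|-1$, i.e.\ the region $W[t:|W|]$ has period $d$. I would then start from the occurrence of $Q$ at $p+a-1$ and shift it left by multiples of $d$: each such shift preserves the match because the shifted window stays inside the periodic region, and I keep shifting while the start remains $\ge t$. The smallest start reached is congruent to $p+a-1$ modulo $d$ and, because $t+d=p$, it lies in the interval $[t,p)$, hence strictly before $B$. This yields the desired early occurrence of $Q$, so $Q$ is navigable in $\mathcal{T}$ and the subtree $\mathcal{T}_Q$ exists.

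The step needing the most care is the overlapping case: I must verify that every leftward shift by $d$ really stays within the range where the period holds, so that the match is genuinely preserved, and that the terminal start lands in $[t,p)$ rather than below $t$. Both facts follow from the bound $p+b-1\le|W|$ on the original occurrence and from $d\ge1$, but these are exactly the details I would write out explicitly, since they are what makes the periodicity argument go through.
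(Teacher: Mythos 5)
Your proof is correct, but it takes a genuinely different route from the paper's. The paper argues structurally via suffix links: since $Q$ occurs in $B$ and $B$ is a prefix of the string spelled out by the active node $\beta$, following suffix links from $\beta$ the appropriate number of times reaches a node spelling a string that begins with $Q$, and that node roots $\mathcal{T}_Q$; the whole argument is two sentences and leans on the (somewhat delicate, and not fully spelled out) suffix-link structure of the unfinalized tree. You instead reduce the lemma to a purely combinatorial statement --- $Q$ occurring in $B$ forces an occurrence of $Q$ starting strictly before the start $p$ of $B$ --- using the Ukkonen invariant that exactly the suffixes of $B$ are the missing leaves, and then settle that statement by exploiting the earlier occurrence of the repeated suffix $B$ at some $t<p$: the disjoint case is immediate, and the overlapping case follows by shifting the occurrence of $Q$ leftward by multiples of the period $d=p-t$ until its start lands in $[t,p)$. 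I checked the two details you flagged: each shift stays inside the region where $W[j]=W[j+d]$ holds precisely because the original occurrence satisfies $s+|Q|-1\le|W|$, and the terminal start indeed falls in $[t,p)$ since you stop at the last value $\ge t$ and $t+d=p$. What your approach buys is rigor and self-containedness --- it avoids relying on suffix links out of $\beta$ (which is problematic when $\beta$ is a leaf, a case the paper must later treat separately) and needs only the invariant that $B$ is a repeated suffix of $W$; notably, your periodicity argument anticipates the paper's own Buffer Pumping Lemma. What the paper's proof buys is brevity and a direct link to the mechanism (suffix links and LCA in $\mathcal{L}$) that the query algorithm actually uses to enumerate the occurrences in $B$.
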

\begin{proof}[Proof]
	If $Q$ exists somewhere in $B$, then $Q$ is a substring of a string spelled out by $\beta$.
	From the property of the suffix tree, by following the suffix links from $\beta$ we will find a node which spells out a string with $Q$ at the beginning. This node is a root of $\mathcal{T}_Q$.	
\end{proof}

To consider occurrences of $Q$ in $B$ where $|B|>|Q|$, we determine the relation of each node in $\mathcal{T}_Q$ to $\beta$. Since $|\mathcal{T}_Q| = O(occ)$ we can afford this operation, if we spend at most constant time per node. We proceed depending on whether $\beta$ is an internal node of $\mathcal{T}$ or not.

\begin{lemma}
	\label{lemma:beta_tq}
	Let $\beta$ be the active node of $\mathcal{T}$, and let $\beta$ be an internal node. String $Q$ is located in $B$ at position $i$, iff $\text{suffix\_link}(\beta)^i$ is a node in $\mathcal{T}_Q$.
\end{lemma}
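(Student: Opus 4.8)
The plan is to reduce both sides of the equivalence to a single statement about the string spelled out by $\beta$, which I denote $S_\beta$, and then read off the conclusion from the geometry of the tree. First I would record two elementary facts. A node $\gamma$ lies in $\mathcal{T}_Q$ exactly when $Q$ is a prefix of the string spelled out by $\gamma$: this is immediate from the definition of $\mathcal{T}_Q$ as the subtree rooted at the node where navigating $Q$ ends, since $\gamma$ sits in that subtree precisely when its spelled string has $Q$ as a prefix. Second, a suffix link removes exactly the first character of the spelled string, so $\text{suffix\_link}(\beta)^i$ spells out $S_\beta[i+1:]$ and has string depth $|\beta|-i$; in particular the node exists for every $0\le i\le|\beta|$, because the suffix-link chain of an internal node lowers the string depth by one at each step until it reaches the root.

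The bridge between the two sides is the defining property of the active node: $B$ is a prefix of $S_\beta$, hence $B[i+1:]$ is a prefix of $S_\beta[i+1:]$ for every $i$. For the forward implication, if $Q$ is located in $B$ at position $i$, i.e.\ $Q=B[i+1:i+|Q|]$, then $Q$ is a prefix of $B[i+1:]$ and therefore of $S_\beta[i+1:]$; by the two facts above $\text{suffix\_link}(\beta)^i$ then lies in $\mathcal{T}_Q$, which exists by the preceding lemma. For the converse, if $\text{suffix\_link}(\beta)^i\in\mathcal{T}_Q$ then $Q$ is a prefix of $S_\beta[i+1:]$, and it remains to turn this into an honest occurrence inside $B$ rather than inside the possibly longer string $S_\beta$.

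The hard part will be exactly this boundary of the converse. When $\beta$ is internal we may have $|\beta|>|B|$, so $S_\beta$ strictly extends $B$; a short trailing segment $B[i+1:]$ can then be shorter than $Q$ while $Q$ is still a prefix of the longer string $S_\beta[i+1:]$, which would place $\text{suffix\_link}(\beta)^i$ in $\mathcal{T}_Q$ without producing any occurrence of $Q$ inside $B$. I would rule this out by restricting to positions with $i\le|B|-|Q|$, equivalently to suffix-link ancestors $\gamma=\text{suffix\_link}(\beta)^i$ whose string depth satisfies $|\gamma|=|\beta|-i\ge|\beta|-|B|+|Q|$; for such $i$ the segment $B[i+1:]$ has length at least $|Q|$, so ``$Q$ is a prefix of $S_\beta[i+1:]$'' collapses to ``$Q=B[i+1:i+|Q|]$'' and the equivalence closes. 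The only genuine work is this length bookkeeping, together with checking that the bound $|\gamma|\ge|\beta|-|B|+|Q|$ is testable in constant time per node, as the surrounding $O(|Q|+occ)$ traversal of $\mathcal{T}_Q$ demands; once that is in place, both implications are one-line prefix arguments.
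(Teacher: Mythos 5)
Your argument follows the same skeleton as the paper's: both reduce the equivalence to the facts that $B$ is a prefix of the string $S_\beta$ spelled out by $\beta$, that following the suffix link $i$ times strips the first $i$ characters, and that membership in $\mathcal{T}_Q$ means exactly ``$Q$ is a prefix of the spelled string.'' The substantive difference is your boundary analysis of the converse, and there you have identified a genuine gap in the paper's own proof rather than introduced one. The paper's ($\Leftarrow$) direction concludes that ``our reached node corresponds to a suffix of $B$ at position $i$,'' silently identifying $B$ with $S_\beta$; but the active node only satisfies $|B|\le|\beta|$, and when $B$ ends strictly inside the edge into $\beta$ there can be indices with $|B|-|Q| < i \le |\beta|-|Q|$ for which $\text{suffix\_link}(\beta)^i$ lies in $\mathcal{T}_Q$ even though $B[i+1:]$ is too short to contain $Q$ --- the witnessed occurrence lives in $S_\beta$, i.e.\ past the end of the stream, so reporting position $i$ is spurious. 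A concrete instance: for the window \texttt{xabcdyabcdzabc} one has $B=\texttt{abc}$ and $\beta$ spelling \texttt{abcd}, and for $Q=\texttt{cd}$ the node $\text{suffix\_link}(\beta)^2$ spells \texttt{cd} and is the root of $\mathcal{T}_Q$, yet \texttt{cd} does not occur at position $2$ of $B$. Your restriction $i\le|B|-|Q|$ (equivalently the string-depth test $|\gamma|\ge|\beta|-|B|+|Q|$) is precisely the side condition under which the equivalence holds, it costs one comparison per candidate node so the $O(|Q|+occ)$ bound survives, and it is what the query algorithm must actually check. The one presentational caveat is that you are then proving an amended statement, not the lemma as written; that amendment should be made explicit in the lemma rather than buried in the proof.
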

\begin{proof}
	($\Rightarrow$) We need to prove that a node corresponding to a suffix of $B$ which starts with $Q$ exists in $\mathcal{T}_Q$ since $\mathcal{T}$ is not finalized. Recall the expanding state of Ukkonen's algorithm. At each call, the operation adds a leaf and possibly an internal node, whereas the existing internal nodes are left untouched. Since $\beta$ is an internal node, no changes will be made either to it or the nodes visited when recursively following the suffix link from $\beta$, since they are also internal nodes. Therefore, a node corresponding to a suffix of $B$ which begins with $Q$ exists in $\mathcal{T}_Q$, if such a suffix exists in $B$.
	
	($\Leftarrow$) By definition of $\beta$, $B$ is a prefix of a string which $\beta$ spells out. $\beta$ is also an internal node, so it will always contain an outgoing suffix link (in case $|\beta|=1$, let the suffix link point to the root node). When following the suffix link of $\beta$, each time we implicitly remove one character from the beginning of $B$. Suppose we follow the suffix link $i$ times and reach a node which is a member of $\mathcal{T}_Q$. By definition of the suffix tree, each node in $\mathcal{T}_Q$ spells out a string which starts with $Q$. Therefore, our reached node corresponds to a suffix of $B$ at position $i$ and starts with $Q$.
\end{proof}

By using the lowest common ancestor operation (LCA) we can check in constant time whether a node is reachable from another node by following the suffix links in $\mathcal{T}$. If $\alpha$ is an ancestor of $\beta$ in $\mathcal{L}$ (i.e.\ the LCA of $\alpha$ and $\beta$ in $\mathcal{L}$ is $\alpha$), then $\alpha$ is reachable by following the suffix links from $\beta$ in $\mathcal{T}$. To determine all occurrences of $Q$ in $B$ in time $O(|\mathcal{T}_Q|)$, for each candidate node $\alpha$ in $\mathcal{T}_Q$ we find its LCA with $\beta$ in $\mathcal{L}$. If the LCA is $\alpha$ and $\alpha$ is an $i^{th}$ ancestor of $\beta$ in $\mathcal{L}$, then by Lemma \ref{lemma:beta_tq} $Q$ is located in $B$ at position $i$.

If $\beta$ is a leaf of $\mathcal{T}$, we cannot use the approach described above, because leaves do not have usable suffix links. We find occurrences of $Q$ in $B$ by exposing a repetitive pattern $P$ inside $B$.

\begin{lemma}[The Buffer Pumping Lemma]
	The buffer $B$ is extended by a new character $c$ during Ukkonen's suffix tree construction algorithm without inflicting the expansion of a tree, iff $c$ corresponds to the next character in a repetitive pattern $P$ inside $B$.
\end{lemma}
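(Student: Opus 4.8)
The plan is to reduce the statement to a combinatorial fact about the periodicity of the processed text, using the description of the buffering state recalled in Section~\ref{chap:ukkonen}. Let $T$ denote the text of length $n$ read so far. First I would restate the extension condition at the algorithmic level: by definition of the buffering state, reading a character $c$ extends $B$ to $Bc$ \emph{without} switching to the expanding state exactly when $c$ equals the character at position $|B|+1$ of the string spelled out by the active node $\beta$. The entire proof then consists of identifying that matched character and recognising it as the next character of a repetition.

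Next I would use the hypothesis that $\beta$ is a leaf. A leaf spells out a full earlier suffix $T[j:]$, and since the active point lies strictly inside the edge entering $\beta$ (so $|B|<|\beta|$), the buffer $B$ is a proper prefix of $T[j:]$; hence $B$ occurs earlier, at position $j$. Being the longest repeated suffix, $B$ also occurs as the current suffix, at $s=n-|B|+1>j$. Because the locus of $B$ has the single leaf $\beta$ below it, exactly one already-inserted suffix has $B$ as a prefix, so $j$ and $s$ are the \emph{only} two occurrences of $B$ in $T$. I would then set $p=s-j$ and equate the two occurrences character by character, $T[j+t]=T[s+t]$ for $0\le t<|B|$, which yields $T[k]=T[k+p]$ throughout the range $j\le k\le s-1$. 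This says precisely that $T[j:n]$ is periodic with period $p$, i.e.\ it is built by repeating the block $P=T[j:j+p-1]$, the advertised repetitive pattern.

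It remains to match the algorithmic condition against this period. The matched character at position $|B|+1$ of $T[j:]$ is $T[j+|B|]$, and a one-line index computation gives $j+|B|=n+1-p$; thus $T[j+|B|]=T[n+1-p]$ is exactly the character that continuing $P$ predicts for the incoming position $n+1$. Hence $B$ extends without expansion iff the read character $c$ equals $T[n+1-p]$, iff $c$ is the next character of $P$. In that case the new buffer $Bc$ is again periodic with period $p$ and still occurs at $j$, so $\beta$ remains a leaf and the characterisation reproduces itself; this is what makes the pattern keep being pumped. Both implications of the equivalence follow.

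The step I expect to be the main obstacle is the middle one: arguing rigorously, from the algorithmic state, that a leaf active node forces the longest repeated suffix to have a unique earlier occurrence and therefore a single clean period $p=s-j$. The delicate points are justifying that the locus of $B$ genuinely lies on a to-leaf edge (a unique earlier occurrence rather than a branching internal node), checking the boundary condition $|B|<|\beta|$ that guarantees $T[j+|B|]$ exists, and reconciling the phrase ``pattern inside $B$'' with the early regime $|B|<p$, where $B$ is still only a prefix of $P$ and a full copy of $P$ becomes visible inside $B$ only after the pumping has continued for a full period.
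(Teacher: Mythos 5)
Your proof is correct and follows essentially the same route as the paper's: both locate an earlier occurrence of $B$ at distance $p$ from the current occurrence at $n-|B|$, deduce from the overlap that $B$ is periodic with period $p$ (i.e.\ $B=P^kP'$), and conclude that the unique character extending $B$ without tree expansion is exactly the next character of $P$. The only real difference is that you invoke the single-leaf hypothesis to make the earlier occurrence unique and the period well defined --- a point the paper defers to the corollary immediately following the lemma --- whereas the paper's proof of the lemma itself just takes the \emph{last} occurrence of $Bc$ and asserts the overlap; your version is, if anything, the more carefully justified of the two.
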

\begin{proof}
	($\Rightarrow$)
	Since $c$ does not inflict the expansion of a tree, $Bc$ occurred in the text before. Let $x$ denote the position of last such occurrence as illustrated on Figure \ref{fig:pattern-in-B}. Notice that $B$ starting at $x$ and $B$ starting at $n-|B|$ overlap. Consequently, character $c=W[x-(n-|W|)+|B|+1]$ and in turn $B$ is a concatenation of patterns $P$, $B=P^kP'$ where $k = \lfloor \frac{|B|}{|P|} \rfloor$ and the last repetition $P'$ might be empty.
	
	($\Leftarrow$)
	Given $B$ and a repetitive pattern $P$, we can extend $B$ by a new character $c=P[(|B|+1 \bmod |P|) + 1]$. The expansion of the tree will not occur because $Bc$ was present in the text before and consequently a corresponding edge in partially constructed suffix tree will exist.
\end{proof}

\begin{corollary}
	Let there be a single leaf in the subtree obtained when navigating $B$ in $\mathcal{T}$ and let $x$ be the position stored in this leaf. The repetitive pattern $P$ inside $B$ is $W[x-(n-|W|):n-|B|]$.
\end{corollary}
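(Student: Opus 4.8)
The plan is to read off the corollary as a direct specialization of the Buffer Pumping Lemma, using the stored leaf position $x$ to pin down the concrete repeating unit. The Buffer Pumping Lemma already guarantees that $B$ is periodic, $B = P^k P'$; what remains is to exhibit the pattern $P$ explicitly as the substring $W[x-(n-|W|):n-|B|]$.

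First I would interpret the hypothesis. Navigating $B$ in $\mathcal{T}$ reaches a node whose subtree contains exactly one leaf, storing position $x$; hence the suffix stored at $x$ is the unique \emph{committed} suffix that has $B$ as a prefix. This means $B$ occurs at position $x$ and nowhere else among the positions already inserted as leaves (those up to $n-|B|$, since the $|B|$ suffixes of $B$ are exactly the missing leaves). Because $B$ is also, by definition of the buffer, the current suffix of $W$ occurring at position $n-|B|+1$, the single-leaf condition forces $x$ to be precisely the \emph{last} occurrence of $B$ preceding the current suffix, i.e.\ the same position $x$ named in the proof of the Buffer Pumping Lemma.

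Next I would determine the period and read off $P$. Since $x$ is the last earlier occurrence of $B$ and $B$ is the current suffix, the copy of $B$ starting at $x$ and the copy starting at $n-|B|+1$ overlap, exactly as the Buffer Pumping Lemma asserts. Overlapping occurrences at distance $(n-|B|+1)-x$ force that distance to be a period of $B$, so $|P| = (n-|B|+1)-x$ and $P$ is the prefix of $B$ of that length. Because $B$ occurs at position $x$, this prefix is spelled out starting at $x$: it is the substring beginning at $x$ of length $|P|$, and substituting $|P| = (n-|B|+1)-x$ collapses the upper index to $n-|B|$. Translating the start index $x$ into window coordinates then gives $P = W[x-(n-|W|):n-|B|]$, as claimed.

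The main obstacle I anticipate is the bookkeeping that links the two descriptions of $x$ and guarantees overlap: I must argue from the single-leaf hypothesis that $x$ is genuinely the rightmost committed occurrence (so that $|P| \le |B|$ and no spurious longer period arises), and then verify the off-by-one indexing so that the length $(n-|B|+1)-x$ of $P$ matches the endpoints of $W[x-(n-|W|):n-|B|]$ exactly. The periodicity and the factorization $B = P^k P'$ are inherited from the Buffer Pumping Lemma, so beyond fixing these endpoints no new combinatorics on words is required.
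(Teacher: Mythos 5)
Your overall route is the paper's: the single leaf pins down exactly two occurrences of $B$ (the committed one at $x$ and the implicit one at the tail of $W$), and $P$ is the stretch of text between their starting positions. The gap is your claim that the single-leaf hypothesis forces those two occurrences to overlap, i.e.\ that $|P|\le|B|$. Being the rightmost (indeed the only) committed occurrence of $B$ does not bound the distance between the occurrences by $|B|$: take the window $\mathtt{abcab}$ with $B=\mathtt{ab}$; the only committed occurrence of $B$ is at position $1$, the implicit one starts at position $4$, and the two copies are disjoint. There the distance $3$ exceeds $|B|=2$, so it is only a vacuous period of $B$, the substring $W[x-(n-|W|):n-|B|]$ is \emph{longer} than $B$, and your step ``$P$ is the prefix of $B$ of that length, spelled out starting at $x$'' has no meaning.

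The paper's proof handles exactly this by a case split that your argument suppresses: if $|P|>|B|$ the occurrences are disjoint and one only gets that $B$ is a prefix of $P$ (the ``repetition'' is degenerate, $k=0$ in $B=P^kP'$), which is why the query algorithm later treats $|P|>|Q|$ by a separate mechanism (walking the leaves of $\mathcal{T}_Q$ over the interval $x:n-|B|-1$) rather than by pumping; if $|P|<|B|$ the occurrences overlap and the Buffer Pumping Lemma gives $B=P^kP'$ essentially as you argue. So your reasoning is sound in the overlapping case, but the disjoint case must be added rather than argued away; without it the corollary's identification of $P$ is unproven precisely where the algorithm's second branch relies on it.
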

\begin{proof}
	Since the leaf storing $x$ is the only leaf in the obtained subtree, there are exactly two occurrences of $B$ in the text. The first one at position $x$ and the second one at position $n-|B|$. If $|P|>|B|$, then $B$ is a prefix of $P$, because the leaf spelling out $P$ was obtained by navigating $B$. If $|P|<|B|$, then $B=P^k P'$ due to the buffer pumping lemma.
\end{proof}

\begin{figure}[htb]
	\begin{center}
		\includegraphics[width=9.5cm]{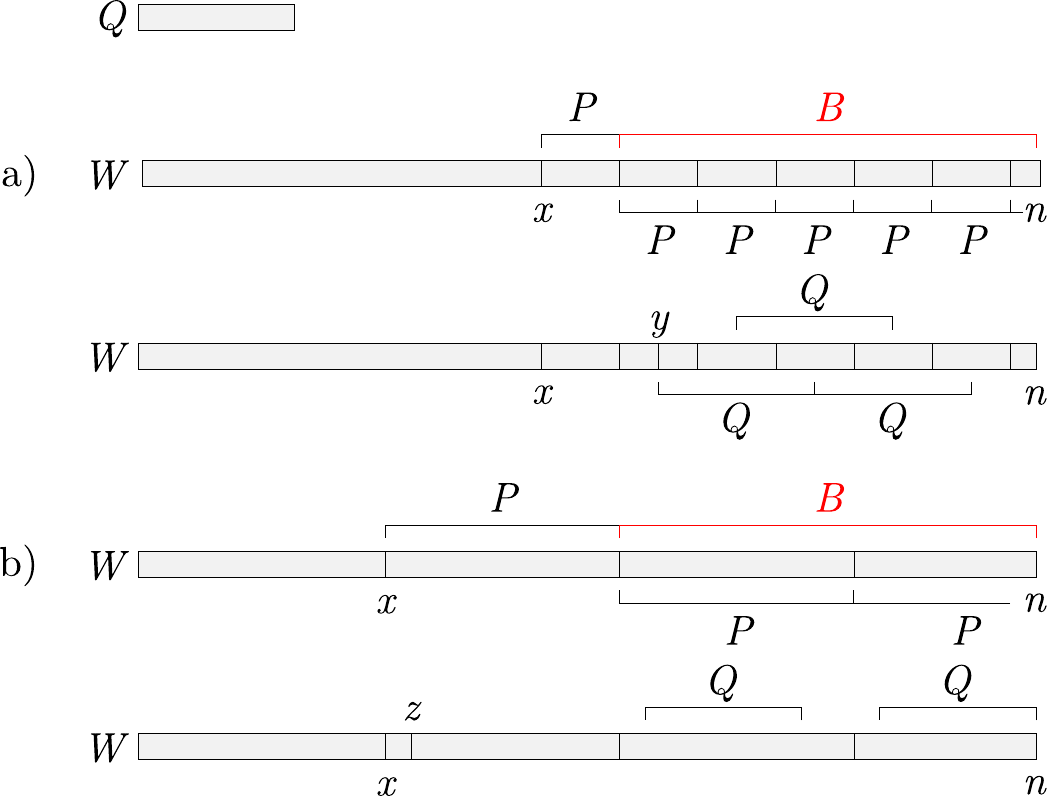}
	\end{center}
	\caption{Structure of $B$ relative to $P$ when $\beta$ is a leaf. Subfigures a) and b) illustrate cases for $|P|\leq|Q|$ and $|P|>|Q|$ respectively. Below each subfigure is an illustration of query $Q$ relative to $y$ and $z$ respectively.}
	\label{fig:pattern-in-B}
\end{figure}

With the help of the lemma and the corollary above we can efficiently determine the positions of $Q$ in $B$ by exposing the repetitive property of the pattern $P$ inside $B$. Depending on the length $|P|$, two cases are possible as illustrated in Figure \ref{fig:pattern-in-B}. If $|P| \leq |Q|$ (Fig.\ \ref{fig:pattern-in-B}.a), we scan for $Q$ in $B$ up to position $2|Q|-1$ inside $B$ and for each such occurrence of $Q$ at some position $y$ we add occurrences $y, y+|P|, y+2|P|, \ldots$ to the resulting set until we reach $n-|Q|$. We require $O(|Q|+occ)$ time in the worst case. If $|P|>|Q|$ (Fig.\ \ref{fig:pattern-in-B}.b), we visit the leaves of $\mathcal{T}_Q$ and consider the suffixes starting inside the interval $x:n-|B|-1$ of the stream. For each such occurrence $z$, we add $z+|P|, z+2|P|\ldots$ to the resulting set until we reach $n-|Q|$. We spend $O(occ)$ time in the worst case.

The data structure we used consists of $(\mathcal{T}, \mathcal{L}, \mathcal{A})$, where $\mathcal{T}$ requires $O(|W|)$ space in the worst case (i.e.\ $|B|=0$) and assuming an alphabet of constant size. Next, $\mathcal{L}$ contains the same number of nodes as $\mathcal{T}$ and is oblivious to the alphabet size, so the space complexity has the same upper bound. Finally, $\mathcal{A}$ used for constant time LCA queries on $\mathcal{L}$ requires linear space in terms of the number of nodes in $\mathcal{L}$. This brings us to the following theorem.

\begin{theorem}
	A user can find all occurrences of query $Q$ in a sliding suffix tree of size $O(|W|)$ in time $\Theta(|Q|+occ)$.
\end{theorem}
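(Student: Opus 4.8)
The plan is to prove the two halves of the statement separately: the space bound $O(|W|)$ and the query time bound $\Theta(|Q|+occ)$. The space bound is essentially already argued in the paragraph preceding the theorem, so I would simply collect those observations: the structure is the triple $(\mathcal{T}, \mathcal{L}, \mathcal{A})$, where $\mathcal{T}$ occupies $O(|W|)$ under a constant-size alphabet, $\mathcal{L}$ has the same node count as the internal nodes of $\mathcal{T}$ and is alphabet-oblivious, and $\mathcal{A}$ is linear in the size of $\mathcal{L}$; summing gives $O(|W|)$.

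For the time bound I would first dispatch the lower bound $\Omega(|Q|+occ)$: any correct algorithm must inspect every character of $Q$ (since queries differing only in their last character must be distinguished) and must write down each of the $occ$ occurrences, giving $\Omega(|Q|)$ and $\Omega(occ)$ respectively. The bulk of the work is the matching upper bound $O(|Q|+occ)$, which I would assemble by walking through the cases already analysed in this subsection.

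The backbone of the upper bound is the navigation of $Q$ in $\mathcal{T}$, costing $O(|Q|)$, which either fails (no subtree $\mathcal{T}_Q$, and by the first lemma of this subsection also no occurrence of $Q$ in $B$, so nothing is reported) or yields $\mathcal{T}_Q$. The single most important structural fact I would isolate is that $|\mathcal{T}_Q| = O(occ)$: since $\mathcal{T}$ is a PATRICIA tree every internal node branches, so the number of nodes in $\mathcal{T}_Q$ is at most twice its number of leaves minus one, and each such leaf is an occurrence counted in $occ$; this is what licenses spending constant time per node of $\mathcal{T}_Q$. I would then collect the leaves of $\mathcal{T}_Q$ in $O(occ)$ and handle $B$ by cases on $|B|$ versus $|Q|$: for $|B|<|Q|$ nothing is added; for $|B|=|Q|$ a single constant-time comparison of $\beta$ against the root of $\mathcal{T}_Q$; for $|B|>|Q|$ with $\beta$ internal, one constant-time LCA in $\mathcal{L}$ per node of $\mathcal{T}_Q$ by Lemma \ref{lemma:beta_tq}, totalling $O(occ)$; and for $|B|>|Q|$ with $\beta$ a leaf, the period $P$ extracted via the Buffer Pumping Lemma and its corollary lets us either scan a prefix of $B$ of length $O(|Q|)$ and propagate each hit along the period ($O(|Q|+occ)$), or propagate directly from the leaves of $\mathcal{T}_Q$ ($O(occ)$). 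Summing all cases gives $O(|Q|+occ)$.

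The step I expect to be the main obstacle is making the $|\mathcal{T}_Q|=O(occ)$ accounting airtight across all four buffer cases simultaneously, because $occ$ mixes occurrences recorded as leaves of $\mathcal{T}_Q$ with the \emph{missing} occurrences living inside $B$. I would need to argue that in every case the constant-per-node and scanning costs are dominated by the final value of $occ$, and in particular that the period propagation in the leaf-$\beta$ case never emits more positions than actually occur, since it halts at $n-|Q|$, so that no case inflates the total beyond $O(|Q|+occ)$.
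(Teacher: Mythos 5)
Your proposal is correct and follows essentially the same route as the paper, whose proof of this theorem is implicit in the preceding subsection: navigate $Q$ to obtain $\mathcal{T}_Q$ in $O(|Q|)$, report its leaves, and handle the occurrences hidden in $B$ by the same four-way case analysis ($|B|<|Q|$, $|B|=|Q|$, $|B|>|Q|$ with $\beta$ internal via constant-time LCA per node of $\mathcal{T}_Q$, and $|B|>|Q|$ with $\beta$ a leaf via the periodicity of $P$), together with the space accounting for $(\mathcal{T},\mathcal{L},\mathcal{A})$. Your explicit justification of $|\mathcal{T}_Q|=O(occ)$ via PATRICIA branching and the $\Omega(|Q|+occ)$ lower bound are welcome elaborations of facts the paper asserts without argument, but they do not change the approach.
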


\subsection{Maintenance}

To shift window $W$, we read a character $c$ and add it to our data structure and at the same time remove the oldest (longest) stored suffix. During the maintenance no queries can be performed.

To add a character, we first execute the original Ukkonen algorithm as described in subsection \ref{chap:ukkonen}. During the expanding state we add to $\mathcal{T}$ either one node (a new leaf is added to the active node) or two nodes (the incoming edge of the active node is split and a new leaf is added). Since $\mathcal{L}$ contains only internal nodes of $\mathcal{T}$, it remains unchanged in the first case and in the second case, a node is also added to $\mathcal{L}$ as follows.

When the expanding state is visited the first time, a new internal node $\gamma'_\mathcal{T}$ is added to $\mathcal{T}$. We also add a new node $\gamma'_\mathcal{L}$ to $\mathcal{L}$. At this point no suffix link originating in $\gamma'_\mathcal{T}$ has been set, so $\gamma'_\mathcal{L}$ does not have a parent in $\mathcal{L}$ yet. In the next step either an expanding state is re-entered or a buffering state is entered. If the expanding state is re-entered, we repeat the procedure obtaining new nodes $\gamma_\mathcal{T}$ and $\gamma_\mathcal{L}$. Now, a suffix link is created from $\gamma'_\mathcal{T}$ to $\gamma_\mathcal{T}$ and consequently a parent of $\gamma'_\mathcal{L}$ becomes $\gamma_\mathcal{L}$. If the buffering state is entered, either a root node or a node containing the matched character is reached. Instead of creating new nodes in $\mathcal{T}$ and $\mathcal{L}$ as we did in the expanding state, we create a suffix link to an existing node in $\mathcal{T}$ and set the parent of a node in $\mathcal{L}$ accordingly.

Adding a suffix to $\mathcal{T}$ requires constant amortized time \cite{Ukkonen1995}. During the re-entrances to the expanding state, a chain of nodes was formed in $\mathcal{L}$ which was finally attached to the existing node in constant time when the buffering state was entered. For updating $\mathcal{A}$, attaching a chain of nodes to a tree requires linear time in the length of a chain \cite{Cole2005}. By amortizing all expanding calls, adding a new character takes amortized constant time.

To remove the oldest stored suffix from $\mathcal{T}$, we first find the corresponding leaf (e.g.\ by following a linked list of all leaves). If the leaf's parent has three or more children, the parent remains unchanged and we just remove the leaf from $\mathcal{T}$. Since leaves of $\mathcal{T}$ are not present in $\mathcal{L}$, $\mathcal{L}$ and consequently $\mathcal{A}$ remain unchanged.

On the other hand, if the leaf's parent has exactly two children, we remove the leaf from $\mathcal{T}$ and also its parent $\gamma_\mathcal{T}$ from $\mathcal{T}$ and $\gamma_\mathcal{L}$ from $\mathcal{L}$. To remove $\gamma_\mathcal{T}$ we merge its incoming and the remaining outgoing edges. Due to the following lemma, we can also safely remove $\gamma_\mathcal{L}$ since it is always a leaf in $\mathcal{L}$.

\begin{lemma}
	Let $\gamma_\mathcal{T}$ be a node with two children in $\mathcal{T}$, where one child is a leaf storing a position of the longest suffix $n-|W|$. Then, $\gamma_\mathcal{T}$ is not a terminating node of any suffix link.
\end{lemma}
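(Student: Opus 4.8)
The plan is to argue by contradiction, using the defining property of $\mathcal{L}$: since a suffix link from $\alpha'$ to $\alpha$ makes $\alpha$ the parent of $\alpha'$ in $\mathcal{L}$, the node $\gamma_\mathcal{T}$ is a terminating node of some suffix link exactly when $\gamma_\mathcal{L}$ has a child in $\mathcal{L}$. So I want to show $\gamma_\mathcal{L}$ is childless. Write $G$ for the string spelled out by $\gamma_\mathcal{T}$. Because one child of $\gamma_\mathcal{T}$ is the leaf of the longest suffix (the whole window, stored at $n-|W|$), $G$ is a prefix of $W$; let $d$ be the character of $W$ immediately following this prefix, so the edge toward the longest-suffix leaf leaves $\gamma_\mathcal{T}$ in direction $d$. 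The structural fact I would exploit is that, since $\gamma_\mathcal{T}$ has exactly two children and this particular child is a \emph{single} leaf, the string $Gd$ is a prefix of exactly one indexed suffix, namely the longest one.

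Next I would assume, for contradiction, that $\gamma_\mathcal{T}$ is a terminating node of a suffix link, i.e.\ that there is an internal node spelling $cG$ for some character $c$ with $\text{suffix\_link}$ pointing to $\gamma_\mathcal{T}$. An internal node spelling $cG$ means $cG$ branches, so it is followed in $W$ by two distinct characters; dropping $c$, these are also right-extensions of $G$, and since $\gamma_\mathcal{T}$ has only the two out-directions $d$ and $e$, they must be precisely $d$ and $e$. In particular $cGd$ occurs in $W$, and this forces a second occurrence of $Gd$, at some position $j\ge 2$ (the one preceded by $c$); the suffix of $W$ starting at $j$ then begins with $Gd$.

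The contradiction I am aiming for is that this second occurrence contributes a \emph{second} indexed leaf under the $d$-edge of $\gamma_\mathcal{T}$, contradicting that this child is a lone leaf. This goes through cleanly whenever the occurrence at $j$ sits at an indexed position. For instance, if $|G|\ge|B|$ the factor $cGd$ has length $|G|+2$, so $j\le|W|-|G|\le|W|-|B|$, which is exactly the threshold below which a suffix is explicit in the unfinalized tree; hence the suffix at $j$ is an explicit leaf and we are done.

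I expect the genuinely delicate step, and the main obstacle, to be the remaining unfinalized case $|G|<|B|$, where the second occurrence of $Gd$ can fall inside the implicit buffer $B$ and thus correspond to a \emph{missing} leaf rather than an explicit one; there the two-leaf contradiction does not arise directly. Closing this case seems to require squeezing extra information out of the buffer: I would try to combine the Buffer Pumping Lemma (so that $B$ is periodic with some pattern $P$) with the fact that $W$ itself occurs only once in the window (as $B$ is a \emph{proper} suffix of $W$) to rule out a buffer-internal reoccurrence of the window prefix $Gd$. I would scrutinize this case most carefully, since it is exactly where a naive argument threatens to break down, and I am not yet convinced it can be excluded without sharpening the hypotheses.
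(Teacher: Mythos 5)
Your argument, in the cases you do close, is essentially the paper's: the paper also assumes a node $\gamma'_\mathcal{T}$ spelling $cG$ with a suffix link to $\gamma_\mathcal{T}$, argues that the children of $\gamma'_\mathcal{T}$ embed (by dropping $c$) into the two children of $\gamma_\mathcal{T}$, concludes that the child of $\gamma'_\mathcal{T}$ in the direction of the lone leaf would have to spell $W$ prepended by a character, and derives a contradiction because no suffix longer than $W$ exists, leaving $\gamma'_\mathcal{T}$ with one child and hence nonexistent by path compression. The step you flag as delicate is exactly the step the paper performs without justification: the claim that ``$\gamma'_\mathcal{T}$ contains a subset of nodes of $\gamma_\mathcal{T}$'' is only valid when every occurrence of $Gd$ obtained by shifting an occurrence of $cGd$ one position to the right is itself an \emph{explicit} leaf under $\gamma_\mathcal{T}$. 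When that shifted occurrence falls at the start of the implicit buffer $B$, it corresponds to a missing leaf, the embedding fails, and the paper's inference that the $d$-child of $\gamma'_\mathcal{T}$ spells $cW$ is simply wrong.

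Your suspicion that this case cannot be excluded without sharpening the hypotheses is correct; the lemma as stated is false. Take the window $W=\texttt{aabzcabzcaab}$ ($|W|=12$, positions $1$--$12$). The longest repeated suffix is $B=\texttt{aab}$ (occurring at positions $1$ and $10$), so the explicit leaves are the suffixes at positions $1$ through $9$. The node $\gamma_\mathcal{T}$ spelling $\texttt{a}$ has exactly two children: in direction $\texttt{a}$ a single leaf storing position $1$, i.e.\ the longest suffix, and in direction $\texttt{b}$ the subtree of the suffixes at positions $2$ and $6$; the hypothesis of the lemma holds. Yet the explicit suffixes at positions $5$ ($\texttt{cabzcaab}$) and $9$ ($\texttt{caab}$) diverge after $\texttt{ca}$, so Ukkonen's algorithm has created an internal node spelling $\texttt{ca}$ whose suffix link terminates at $\gamma_\mathcal{T}$. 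This is precisely your problematic configuration: the second occurrence of $Gd=\texttt{aa}$ sits at position $10$, inside $B$, so it contributes no second explicit leaf under $\gamma_\mathcal{T}$, while the occurrence of $cGd=\texttt{caa}$ at position $9$ is explicit and sustains the branching of $\gamma'_\mathcal{T}$. So do not try to force the periodicity argument through: the correct conclusion of your analysis is that the lemma needs an additional hypothesis (or the removal procedure needs to handle dangling suffix links, as in the sliding-window suffix tree literature on credit-based leaf deletion), not that your case analysis is missing a trick.
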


\begin{proof}[Proof by contradiction.]
	Assume there is a node $\gamma'_\mathcal{T}$ in $\mathcal{T}$ with a suffix link pointing to $\gamma_\mathcal{T}$. Since $\gamma_\mathcal{T}$ has two children, $\gamma'_\mathcal{T}$ has at most two children, because $\gamma'_\mathcal{T}$ contains a subset of nodes of $\gamma_\mathcal{T}$. Observe the child of $\gamma_\mathcal{T}$ storing the position $n-|W|$ i.e.\ it spells out $W$. One child in $\gamma'_\mathcal{T}$ should then spell out $W$ prepended by some character. Since $W$ is already the longest suffix which exists in the window, a longer suffix and its corresponding leaf do not exist. Then, only one child of $\gamma'_\mathcal{T}$ remains and due to path compression $\gamma'_\mathcal{T}$ does not exist in $\mathcal{T}$ which contradicts the initial assumption.
\end{proof}

In the moment of removal, the removed leaf or its parent can be an active node $\beta$. If this is the case, then $B$ was a prefix of the removed suffix. Recall that at any time, $B$ corresponds to the longest repeated suffix of the window. Since the oldest suffix is removed by shifting the window, a new longest repeated suffix is consequently shortened for one character by updating $B$ to $B[2:]$. To find a new $\beta$ and an edge corresponding to the updated $B$, we simply follow the suffix link of the $\beta$'s parent and navigate the remainder of $B$ from the obtained node. The navigation time is amortized over all expanding calls, so finding a new $\beta$ requires amortized constant time.

To remove a leaf from $\mathcal{L}$ and $\mathcal{A}$ we require constant time in the worst case \cite{Cole2005}.

During the shift operation, no additional data structures are used. Consequently, the space complexity of the sliding suffix tree remains asymptotically unchanged. We conclude with the following theorem.

\begin{theorem}
	The sliding suffix tree of size $O(|W|)$ can be shifted in amortized $\Theta(1)$ time.
\end{theorem}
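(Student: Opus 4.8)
The plan is to bound the amortized cost of a single \texttt{shift} operation by decomposing it into its two constituent parts---adding the new character $c$ and removing the oldest suffix---and showing each costs amortized constant time across the four data-structure components $(\mathcal{T}, \mathcal{L}, \mathcal{A})$. Since the query data structures are layered on top of Ukkonen's algorithm, I would lean heavily on the amortized analysis already established in \cite{Ukkonen1995} and the dynamic LCA maintenance bounds in \cite{Cole2005}, treating these as black boxes and focusing the argument on the interaction between the layers.

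First I would analyze the addition of $c$. The cost of updating $\mathcal{T}$ alone is amortized constant by Ukkonen's original result. The new work is maintaining $\mathcal{L}$ and $\mathcal{A}$. Here the key observation is that every re-entrance into the expanding state appends exactly one node to a chain being built in $\mathcal{L}$, and this chain is attached to an existing node only once, when the buffering state is finally reached. Attaching a chain of length $\ell$ to the dynamic LCA structure $\mathcal{A}$ costs $O(\ell)$ by \cite{Cole2005}. The decisive point is that $\ell$ equals the number of expanding calls during this \texttt{shift}, and since each expanding call shortens $B$ by one character while the total number of such shortenings over the whole stream is bounded by the total number of characters read (each character can only be removed from $B$ once after it was added), the standard Ukkonen amortization argument applies verbatim: the sum of all $\ell$ values is $O(n)$, so the per-operation amortized cost is $\Theta(1)$.

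Next I would handle the removal of the oldest suffix. Finding the target leaf is $O(1)$ (via the leaf linked list). The case split is on the parent's degree: if the parent has three or more children, only a leaf is deleted, and since leaves are absent from $\mathcal{L}$, neither $\mathcal{L}$ nor $\mathcal{A}$ changes. If the parent $\gamma_\mathcal{T}$ has exactly two children, we additionally delete $\gamma_\mathcal{T}$ and the corresponding $\gamma_\mathcal{L}$. The correctness here rests on the final lemma of the excerpt: because $\gamma_\mathcal{T}$ is never the terminating node of a suffix link, $\gamma_\mathcal{L}$ is a leaf in $\mathcal{L}$, so deleting it from $\mathcal{L}$ and $\mathcal{A}$ costs worst-case $O(1)$ by \cite{Cole2005}. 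The remaining subtlety is the possible invalidation of the active node $\beta$; I would argue that since $B$ is merely shortened to $B[2:]$, the new active node is recovered by following one suffix link from $\beta$'s parent and navigating the residual suffix, whose cost is again amortized over the expanding calls and hence amortized constant.

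The main obstacle, and the step deserving the most care, is confirming that the chain-attachment cost in $\mathcal{A}$ and the residual navigation cost in $\beta$-recovery both fold cleanly into Ukkonen's existing potential-based accounting rather than introducing a new amortized charge that could stack unfavorably. In particular I would want to verify that the $O(\ell)$ chain-attachment charges and the navigation charges are bounded by \emph{the same} sequence of expanding-state entrances that Ukkonen already amortizes, so that a single potential function (e.g.\ tracking $|B|$ or the number of pending leaves) simultaneously dominates both the insertion and deletion work. Once this shared accounting is secured and the space claim is noted to be unchanged since no auxiliary structures are allocated during \texttt{shift}, the theorem follows.
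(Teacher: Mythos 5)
Your proposal follows essentially the same route as the paper: amortizing the $\mathcal{L}$-chain attachment and $\mathcal{A}$ update over Ukkonen's expanding-state entrances, the two-case leaf-removal analysis relying on the lemma that $\gamma_\mathcal{L}$ is a leaf of $\mathcal{L}$, and recovering the active node via the parent's suffix link with amortized navigation. The argument is correct and matches the paper's own proof in both decomposition and accounting.
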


\section{Conclusions and Open Problems}
\label{sec:conclusions}

In this paper we presented a sliding suffix tree for performing online substring queries on a stream. By extending Ukkonen's online suffix tree construction algorithm, the presented data structure supports queries in optimal $\Theta(m+occ)$ time for alphabets of constant size while maintaining amortized constant time updates, where $m$ is the length of the query string and $occ$ the number of occurrences.

An open question remains whether the data structure can be updated in worst case constant time. There is a well known linear time suffix sorting lower bound \cite{Farach2000}, but to our knowledge, no per-character lower bound has been explored. Ukkonen's algorithm requires, by design, an amortized constant time for updates due to the implicit buffer of unfinalized nodes. To the best of our knowledge, no other online suffix tree construction algorithm has been developed without the implicit buffer.

In this paper, we assumed a constant size of the alphabet $\Sigma$ in asymptotic times for queries and updates. For arbitrary size of $\Sigma$, the current implementation of $\mathcal{T}$ data structure requires an additional factor of $\lg |\Sigma|$ time to determine a child at each step and maintain the same space complexity whereas $\mathcal{L}$ and $\mathcal{A}$ data structures are oblivious to $|\Sigma|$. An interesting question is whether the same asymptotic times can be achieved for integer alphabets as was done in \cite{Farach2000} for texts of fixed length. In our case $|\Sigma|=O(|W|)$, but the alphabet can change in time.

Streaming algorithms are common in heavy throughput environments, therefore it seems feasible to involve parallelism. Recently, two methods were introduced for performing fine-grained parallel queries on suffix trees \cite{Jekovec2015,Christiansen2016}. Both methods perform queries on static data structures only and perhaps supporting the shift operation used by the sliding suffix tree might be feasible. From a more coarse-grained parallelism point of view, the current query and update operations must be executed atomically. An interesting design question is whether the data structure could be designed in a mutable way, so a query and an update can be performed simultaneously, if different parts of the data structure are involved.

Finally, the presented data structure, while theoretically feasible, should also be competitive in practice. From our point of view, the main issue with tree-based data structures used in the sliding suffix tree is space consumption. The majority of the size accounts for the auxiliary data structure used for constant time lowest common ancestor. Some work on practical lowest common ancestor data structures has already been done in \cite{Fischer2006}. We believe that once the data structure is succinctly implemented, it should present a viable alternative to existing solutions.




\bibliography{jekovec}


\end{document}